\newtheorem{proposition}{Proposition}
\newtheorem{theorem}{Theorem}
\newproof{proof}{Proof}
\begin{document}
\title{Complexity of Atoms, Combinatorially}
\author[szabivan]{Szabolcs~Iv\'an\tnoteref{thanksfn}}
\address[szabivan]{University of Szeged, Hungary}
\tnotetext[thanksfn]{This research was supported by the European Union and the State of Hungary, co-financed by the European Social Fund in the framework of T\'AMOP 4.2.4.A/2-11-1-2012-0001 ‘National Excellence Program’ and by the OTKA grant no. 108448.}
\begin{abstract}
Atoms of a (regular) language $L$ were introduced by Brzozowski and Tamm in 2011 as intersections of complemented and uncomplemented quotients of $L$.
They derived tight upper bounds on the complexity of atoms in 2013.
In 2014, Brzozowski and Davies characterized the regular languages meeting these bounds.
To achieve these results, they used the so-called ``\'atomaton'' of a language, introduced by Brzozowski and Tamm in 2011.

In this note we give an alternative proof of their characterization, via a purely combinatorial approach.
\end{abstract}
\maketitle
\section{Introduction}
The state complexity of a regular language $L$ is the number of states of its minimal automaton.
An atom of a language is a non-empty intersection of its quotients, some of which may be complemented.
Brzozowski and Tamm introduced atoms in~\cite{theoryatomata} and found tight upper bounds for their
state complexity in~\cite{DBLP:journals/ijfcs/BrzozowskiT13}, carefully analyzing a particular nondeterministic
finite automaton, the so-called ``\'atomaton'' of a regular language also introduced in~\cite{theoryatomata}.

A language is defined to be \emph{maximally atomic} in~\cite{brzozoMaximally} if it has the maximal number of atoms possible and
each of the individual atoms has the maximal possible state complexity.
In~\cite{DBLP:journals/corr/abs-1302-3906}, Brzozowski and Davies showed that maximal syntactic complexity implies
maximal atomicity and in~\cite{brzozoMaximally} they gave necessary and sufficient conditions for a
language to be maximally atomic.

In this paper we introduce another tool which we call the ``disjoint power set automaton'' of a regular language,
and give a self-contained, purely combinatorial and automata-theoretic proof of their characterization.

\section{Notation}
A \emph{semigroup} $(S,\cdot)$ is a set $S$ equipped with a binary associative operation $\cdot$.
We usually omit the sign $\cdot$ and write $st$ for $s\cdot t$.
A \emph{monoid} is a semigroup $(S,\cdot)$ having a neutral element $1$ satisfying $s1=1s=s$ for each $s\in S$.
Given a finite nonempty set $Q$, two particular semigroups are $\mathcal{T}_Q$ consisting of all
the transformations of $Q$ (i.e. functions $Q\to Q$ with function composition as product) and
its subsemigroup $\mathcal{P}_Q$ consisting of the \emph{permutations} of $Q$.
In order to ease notation in the automata theoretic part, we write function application in diagrammatic
order, i.e. if $p\in Q$ and $f\in\mathcal{T}_Q$, then $pf$ stands for the value to which $f$ maps $p$,
and for $f,g\in\mathcal{T}_Q$ their product is $fg$ defined as $p(fg)=(pf)g$ for each $p$.
Also, when $f\in\mathcal{T}_Q$ and $S\subseteq Q$, then $Sf$ stands for the set $\{sf:s\in S\}$.
The \emph{rank} of a transformation $f\in\mathcal{T}_Q$ is the cardinality of its image $Qf$;
transformations of rank $n$ are called \emph{permutations}, while all other transformations are
called \emph{singular} transformations. 
When $n\geq 0$ is an integer, then $\mathcal{T}_n$ stands for the transformation semigroup $\mathcal{T}_{\{1,\ldots,n\}}$.

An \emph{alphabet} is a finite nonempty set $\Sigma$ of symbols.
A \emph{$\Sigma$-word} is a finite sequence $w=a_1a_2\ldots a_n$ with each $a_i$ being in $\Sigma$.
For $n=0$ we get the \emph{empty word}, denoted $\varepsilon$.
The set $\Sigma^*$ of all words forms a monoid with the operation being (con)catenation, or simply
product of words given by $a_1\ldots a_n\cdot b_1\ldots b_k=a_1\ldots a_nb_1\ldots b_k$. In this
monoid, $\varepsilon$ is the neutral element. The semigroup $\Sigma^+=\Sigma^*-\{\varepsilon\}$
is the semigroup of nonempty words.

A \emph{language} (over $\Sigma$) is an arbitrary subset of $\Sigma^*$.
A \emph{finite automaton} is a system $M=(Q,\Sigma,\delta,q_0,F)$ with $Q$ being the finite nonempty set of states,
$\Sigma$ being the input alphabet, $\delta:Q\times\Sigma \to Q$ is the transition function, $q_0\in Q$ is the
start state and $F\subseteq Q$ is the set of final states.
Given $M$, the monoid $\Sigma^*$ acts on $Q$ from the right as $q\cdot_M \varepsilon=q$
and $q\cdot_M ua=\delta(q\cdot_M u,a)$ for each $q\in Q$, $u\in\Sigma^*$ and $a\in\Sigma$.
When $M$ is clear from the context, we omit the subscript and, in most cases, also the period and write only $qw$
for $q\cdot_M w$. Then, each word $w$ \emph{induces} a function $Q\to Q$, denoted by $w_M$, defined as $q\mapsto qw$.
The \emph{transformation semigroup} of $M$ is $\mathcal{T}(M)=\{w_M:w\in\Sigma^+\}$  -- it is clear that $u_Mv_M=(uv)_M$ so
$\mathcal{T}(M)$ is indeed a semigroup. Most of the time, when $M$ is clear, we omit the subscript also here and identify
$w$ with $w_M$.
Another semigroup associated to $M$ is that of its permutation group
$\mathcal{P}(M)=\{w_M:w\in\Sigma^*,Qw=Q\}$.
The language recognized by $M$ is the language $L(M)=\{w:q_0w\in F\}$. A language is called \emph{regular} if it can be
recognized by a finite automaton. It is well-known that for each regular language there exists a \emph{minimal} automaton,
unique up to isomorphism having the least number of states among all the automata recognizing $L$.

A state $q$ of an automaton is \emph{reachable} from a state $p$ if $pw=q$ for some word $w$.
States that are reachable from $q_0$ are simply called \emph{reachable} states.
A \emph{sink} is a non-final state $p\notin F$ such that $pa=p$ for each $a\in \Sigma$ (thus, $pw=p$ for each word $w$ as well).
Two states $p,q$ are called \emph{distinguishable} if there exists a word $w$ such that exactly one of the states $pw$ and $qw$ belongs to $F$.
It is known that $M$ is minimal iff each pair $p\neq q$ of its states is distinguishable and all its states are reachable.
When $M=(Q,\Sigma,\delta,q_0,F)$ is an automaton and $q\in Q$, then $M_q$ stands for the automaton $(Q,\Sigma,\delta,q,F)$
and $L_q$ for the language recognized by $M_q$.
A state $q$ is \emph{empty} if so is $L_q$. In a minimal automaton, there is at most one empty state which is then a sink.
For a subset $S\subseteq Q$ of states, let $L_S$ stand for $\cup_{q\in S}L_q$.

Given a (regular) language $L\subseteq\Sigma^*$, a well-known associated congruence on words is its \emph{syntactic right congruence}
$\sim_L$ defined as
\[x\sim_Ly\ \Leftrightarrow\ (\forall z:\ xz\in L\Leftrightarrow yz\in L).\]
It is known that the minimal automaton of a regular language $L$ is isomorphic to $(\Sigma^*/{\sim_L},\Sigma,\delta_L,\varepsilon/{\sim_L},L/{\sim_L})$
where $\delta_L(x/{\sim_L},a)=xa/{\sim_L}$. 

Similarly\footnote{Though the notion of syntactic left congruence seems to be natural, we have not found any of its appearance in the literature in this form. However, as it turns out, atoms are precisely the classes of this equivalence relation.}, one can define the syntactic \emph{left} congruence ${}_L\sim$ of a language defined dually as
\[x{}_L\!\sim y\ \Leftrightarrow\ (\forall z:\ zx\in L\Leftrightarrow zy\in L).\]
The \emph{reversal} of a word $w=a_1\ldots a_n$ is the word $w^R=a_n\ldots a_1$, and the reversal of the language $L$ is $L^R=\{w^R:w\in L\}$.
Then obviously, $x\ {}_L\!\sim y$ if and only if $x^R\sim_{L^R} y^R$, since $zx\in L$ holds iff $x^Rz^R\in L^R$.
Hence, classes of the syntactic left congruence are precisely the reversals of the classes of the syntactic right congruence classes of $L^R$.

\section{Atoms of a regular language}
Let $L\subseteq \Sigma^*$ be a regular language and $M=(Q,\Sigma,\delta,q_0,F)$ be its minimal automaton.
Let $n$ stand for $|Q|$, the \emph{state complexity} of $L$.
An \emph{atom} of $L$, as defined in~\cite{DBLP:journals/ijfcs/BrzozowskiT13}, is a nonempty language of the form
\[A_S\quad=\quad\mathop\bigcap\limits_{q\in S}L_q\ \cap\ \mathop\bigcap\limits_{q\notin S}\overline{L_q},\]
for some $S\subseteq Q$. Here $\overline{X}$ stands for complementation with respect to $\Sigma^*$, i.e. $\Sigma^*-X$.
It is clear that $L$ has at most $2^n$ atoms.

That is, a word $w$ is in $A_S$ if $qw\in F$ iff $q\in S$, or equivalently, if $Sw\subseteq F$ and $\overline{S}w\subseteq\overline{F}$.
(For $X\subseteq Q$, $\overline{X}$ denotes $Q-X$.)
An immediate consequence of this characterization is that the atoms of a language are precisely the classes of its syntactic left congruence.
Indeed, first observe that each word $u$ belongs to precisely one atom $A_S$ -- to which $S=\{q:qu\in F\}=\{q_0w:wu\in L\}$.
Hence, $u$ and $v$ belong to the same atom $A_S$ iff $S=\{q_0w:wu\in L\}=\{q_0w:wv\in L\}$ iff $u\ {}_L\!\sim v$.
Thus, atoms are in a one-to-one correspondence with the states of the minimal automaton of $L^R$, in particular
the number of atoms of $L$ coincides with the state complexity of $L^R$.

In~\cite{brzozoMaximally, DBLP:journals/ijfcs/BrzozowskiT13}, the authors achieved results on properties of atoms such as the number and state complexity of individual atoms,
via studying the ``\'atomaton'' of $L$, which is a nondeterministic automaton, actually being isomorphic to the
reversal of the determinized reversal of $M$.
In this paper we suggest another way to study atoms and reprove the characterization of the so-called maximally atomic languages.

To achieve this, we define a modified power set automaton, the \emph{disjoint power square (DPS) automaton} $\mathrm{DPS}(M)=(Q',\Sigma,\Delta,p,F')$ of $M$ as follows: 
\begin{itemize}
\item $Q'\subseteq \bigl(P(Q)\times P(Q)\bigr)\cup\{\bot\}$ consists of the state pairs $(S,T)$ for $S,T\subseteq Q$
  with $S\cap T=\emptyset$, and a sink state $\bot$.
\item $\Delta(\bot,a)=\bot$ and \[\Delta((S,T),a)=\begin{cases}(Sa,Ta)&\textrm{if }Sa\cap Ta=\emptyset\\\bot&\textrm{otherwise}\end{cases}.\]
\item $p$ is an arbitrary state.
\item $F'=P(F)\times P(\overline{F})$, that is, $\{(S,T):S\subseteq F,T\subseteq\overline{F}\}$.
\end{itemize}

It is clear that for any $S,T\subseteq Q$, $S\cap T=\emptyset$ and $w\in \Sigma^*$ we have
\[(S,T)w=\begin{cases}(Sw,Tw)&\hbox{if }{Sw}\cap {Tw}=\emptyset\\\bot&\textrm{otherwise}\end{cases}.\]
Also, since a transformation cannot increase the size of its domain, and the domain of a transformation is empty iff its range is empty,
we get that $(S',T')=(S,T)w$ for some $(S,T)$ and $w$ only if $|S'|\leq |S|$, $|T'|\leq |T|$ with $S'$ ($T'$, resp.) being empty iff
so is $S$ ($T$, resp.)

Employing the notion from~\cite{brzozoMaximally}, we introduce to each $0\leq k\leq n$ the set of  $(n,k)$-type states as follows:
\begin{itemize}
\item $(n,n)$-type states are those of the form $(S,\emptyset)$ with $\emptyset\neq S\subseteq Q$;
\item $(n,0)$-type states are those of the form $(\emptyset,S)$ with $\emptyset\neq S\subseteq Q$;
\item $(n,k)$-type states for $1\leq k<n$ are those of the form $(S,T)$ with $1\leq |S|\leq k$ and $1\leq |T|\leq n-k$, $S\cap T=\emptyset$
  as well as $\bot$.
\end{itemize}

Let $L_{S,T}$ stand for the language accepted by $(Q',\Sigma,\Delta,(S,T),F')$. Then, $w\in L_{S,T}$ iff $Sw\subseteq F$ and $Tw\subseteq \overline{F}$.
In particular, nonempty languages of the form $L_{S,\overline{S}}$ are precisely the (nonempty) atoms $A_S$ of $L$ and thus every atom of $L$
is recognizable in $\mathrm{DPS}(M)$.

Given $S\subseteq Q$, let us call the automaton $M_S=(Q_S, \Sigma, \Delta|_{Q_S}, (S,\overline{S}),F'\cap Q_S)$ the \emph{support automaton} for $S$,
where $Q_S$ consists of the $(n,|S|)$-type states of $\mathrm{DPS}(M)$. Observe that $M_S$ is well-defined since for any $(n,k)$-type state $p$ and
letter $a\in\Sigma$, $\Delta(p,a)$ is also an $(n,k)$-type state.
(Note that in particular $\bot$ is never reachable
from any state of the form $(S,\emptyset)$ or $(\emptyset,S)$ since that would imply $Sw\cap \emptyset w\neq \emptyset$ for some $w$ which 
is nonsense since $\emptyset w=\emptyset$. Hence $\bot$ is not an $(n,n)$ nor an $(n,0)$-type state.)

Hence, $M_S$ is a subautomaton of $\mathrm{DPS}(M)$ recognizing $A_S$, thus
the number of $(n,|S|)$-type states provides an upper bound for the state complexity of $A_S$.

Following~\cite{brzozoMaximally} we define the function $\Psi(n,s)$ as
\[\Psi(n,s)\quad=\quad\begin{cases}2^{n}-1&\textrm{if }n=s\textrm{ or }s=0;\\ 1+\mathop\sum\limits_{k=1}^s\mathop\sum\limits_{\ell=1}^{n-s}\left(n\atop k\right)\left({n-k}\atop\ell\right)&\textrm{otherwise.} \end{cases}\]
This function provides such an upper bound:
\begin{proposition}
The \emph{maximal} number of reachable states from $(S,\overline{S})$ in $\mathrm{DPS}(M)$,
or equivalently, the number of states in $M_S$ is $\Psi(n,|S|)$.

Hence, $\Psi(n,|S|)$ is an upper bound for the state complexity of $A_S$.

\end{proposition}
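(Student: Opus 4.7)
The plan is to establish the bound in two essentially independent steps: first, show that every state reachable from $(S,\overline{S})$ in $\mathrm{DPS}(M)$ is of $(n,|S|)$-type and therefore lies in $Q_S$; second, count $|Q_S|$ and match the result with $\Psi(n,|S|)$. The upper bound on the state complexity of $A_S$ then follows at once, since $M_S$ is a subautomaton of $\mathrm{DPS}(M)$ recognising $A_S$.

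For the first step, I would verify that $\Delta$ preserves the $(n,k)$-type. Invoking the observation stated just before the proposition (that $|S'a|\leq|S'|$ and that $S'a$ is empty iff $S'$ is, and analogously for $T'$), the verification splits into three cases. When $|S|=n$, the starting state is $(S,\emptyset)$; since $\emptyset a=\emptyset$, the intersection $Sa\cap\emptyset a$ is always empty, so $\bot$ is never reached and every successor is of the form $(S',\emptyset)$ with $S'\neq\emptyset$, i.e. an $(n,n)$-type state. The case $|S|=0$ is symmetric. For $1\leq|S|\leq n-1$, both component sizes can only stay the same or shrink, and either both components remain nonempty and disjoint (giving an $(n,|S|)$-type state) or disjointness fails (giving $\bot$, which is itself of $(n,|S|)$-type). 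In all three cases the type is preserved, so reachable states from $(S,\overline{S})$ lie in $Q_S$.

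For the second step, writing $s=|S|$, the boundary cases $s\in\{0,n\}$ put $Q_S$ in bijection with the nonempty subsets of $Q$, giving $|Q_S|=2^n-1$. For $1\leq s\leq n-1$, a non-$\bot$ element of $Q_S$ is an ordered pair $(S',T')$ of disjoint nonempty subsets of $Q$ with $|S'|\leq s$ and $|T'|\leq n-s$; the number of such pairs with $|S'|=k$ and $|T'|=\ell$ is $\binom{n}{k}\binom{n-k}{\ell}$, and summing this over $1\leq k\leq s$ and $1\leq\ell\leq n-s$ and adding $1$ for $\bot$ yields exactly the formula defining $\Psi(n,s)$ in the intermediate range. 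In every case $|Q_S|=\Psi(n,|S|)$, and so $\Psi(n,|S|)$ bounds the state complexity of $A_S$.

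There is no real conceptual difficulty; the only mild care required is in the boundary cases $s\in\{0,n\}$, where the type definition excludes $\bot$ from $Q_S$, matching the $2^n-1$ count rather than the combinatorial sum. This exclusion is consistent with the first step, since $\bot$ is genuinely unreachable from $(Q,\emptyset)$ or $(\emptyset,Q)$.
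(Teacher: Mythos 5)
Your proposal is correct and follows essentially the same route as the paper's proof: first observing that only $(n,|S|)$-type states are reachable from $(S,\overline{S})$ (which the paper delegates to the discussion preceding the proposition, while you spell out the case analysis), and then counting the $(n,|S|)$-type states by the same choice of $k$, $\ell$, and the two nested subsets, with the same treatment of the boundary cases $s\in\{0,n\}$ and of $\bot$.
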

\begin{proof}
As we already argued, from $(S,\overline{S})$ only $(n,|S|)$-type states are reachable.

When $S\in\{Q,\emptyset\}$, there are $2^n-1$ states that are of $(n,|S|)$-type, handling the cases $n=s$ and $s=0$.

For $\emptyset\neq S\subsetneq Q$, the set of $(n,|S|)$-type states consist of the state $\bot$ (that's where the $1+$ part comes from)
and of state pairs $(S',T')$ with $1\leq |S'|\leq|S|$ and $1\leq |T'|\leq n-|S|$, with $S'$ and $T'$ being disjoint.
We can enumerate the number of those pairs by first choosing the size $1\leq k\leq |S|$ of $S'$, then the size $1\leq \ell\leq n-|S|$
of $T'$, then $k$ elements out of the $n$ states for $S'$, and another $\ell$ elements out of the remaining $n-k$ elements (in order to
ensure disjointness) for $T'$. Summing up we get the formula given in $\Psi(n,|S|)$ as an upper bound.
\end{proof}

In~\cite{brzozoMaximally}, an atom $A_S$ is said to have \emph{maximal complexity}, or is simply called maximal, if its state complexity is $\Psi(n,|S|)$.
That is, if every $(n,|S|)$-state is indeed reachable from $(S,\overline{S})$ and are pairwise distinguishable.
A language (of state complexity at least $2$) is called \emph{maximally atomic} if it has $2^n$ atoms, each being maximal.

We note that if the syntactic semigroup of $L$ contains every transformation $f:Q\to Q$, then $L$ is maximally atomic.
Indeed, let $M=(Q,\Sigma,\delta,q_0,F)$ be an automaton recognizing some language $L$ with
$\emptyset\neq F\subsetneq Q$ (thus $|Q|>1$) and $\mathcal{T}(M)=\mathcal{T}_Q$.
Then for any subset $S$ of $Q$ there is a transformation $f_S$ induced by some word $w_S$
mapping members of $S$ inside $F$ and members of $\overline{S}$ inside $\overline{F}$,
hence $w_S$ is in $A_S$, thus $L$ has all the possible atoms. Moreover, whenever $(S,\overline{S})$ is a state
of $\mathrm{DPS}(M)$ and $(S',T')$ is an $(n,|S|)$-type state, then there is a mapping $f$ mapping $S$ surjectively onto $S'$ and $\overline{S}$ onto $T'$,
thus all the $(n,|S|)$-type states are reachable from each $(S,\overline{S})$ (observe that any constant mapping puts
any $(S,T)$ with nonempty $S$ and $T$ to $\bot$, so $\bot$ is also reachable from $(S,\overline{S})$ when $0<|S|<n$).

We still have to check that whenever $(S,T)$ and $(S',T')$ are different states of $\mathrm{DPS}(M)$, then they are distinguishable:
indeed, if $S\not\subseteq S'$, then $f_{S'}$ brings $(S',T')$ to a final state and $(S,T)$ to a non-final state,
analogously if $S'\not\subseteq S$, then $f_S$ distinguishes the two states. Finally, if $S=S'$, then either
$f_{\overline{T}}$ or $f_{\overline{T'}}$ makes the distinction. Also, $\bot$ is the only empty state, being distinguishable from any $(S,T)$.

Hence we have shown the following result:
\begin{proposition}
Let $L$ be a regular language with $M=(Q,\Sigma,\delta,q_0,F)$ being its minimal automaton, $n=|Q|>1$.
Then if $A_S$ is nonempty, its state complexity is upperbounded by $\Psi(n,|S|)$.

Moreover, this bound is tight: if $\mathcal{T}(M)=\mathcal{T}_Q$, then
$L$ has the maximal possible $2^n$ atoms, with $A_S$ having state complexity $\Psi(n,|S|)$ for each $S\subseteq Q$.
\end{proposition}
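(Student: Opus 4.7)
The plan is to split the statement into two parts. The upper bound is immediate from the preceding Proposition: since $M_S$ is a subautomaton of $\mathrm{DPS}(M)$ recognizing $A_S$, and its state set consists exactly of the $(n,|S|)$-type states, of which there are $\Psi(n,|S|)$, the minimal automaton of $A_S$ can have no more states than this. So the real work is in proving tightness under the hypothesis $\mathcal{T}(M)=\mathcal{T}_Q$.

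For tightness, I would first verify that every atom $A_S$ is nonempty. Since $L$ has state complexity $n>1$ and $M$ is minimal, $\emptyset\neq F\subsetneq Q$. For each $S\subseteq Q$ I can then exhibit a transformation $f_S\in\mathcal{T}_Q$ that maps $S$ into $F$ and $\overline{S}$ into $\overline{F}$; the hypothesis realizes $f_S$ as $w_S$ for some word $w_S$, and $w_S\in A_S$ by definition. Hence $L$ has all $2^n$ atoms.

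Next I would show that every $(n,|S|)$-type state is reachable from $(S,\overline{S})$ in $\mathrm{DPS}(M)$. Given a target $(S',T')$ with $1\leq |S'|\leq |S|$ and $1\leq |T'|\leq n-|S|$, the cardinality constraints permit a transformation $f\in\mathcal{T}_Q$ mapping $S$ surjectively onto $S'$ and $\overline{S}$ surjectively onto $T'$; realizing $f$ as a word provides the witness. The sink $\bot$ is reached from $(S,\overline{S})$ whenever $0<|S|<n$ by any word inducing a constant transformation. The boundary cases $|S|\in\{0,n\}$, where $\Psi(n,|S|)=2^n-1$, are handled separately by an analogous reachability argument on states of the form $(S',\emptyset)$ or $(\emptyset,S')$.

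Finally I would establish pairwise distinguishability of the $(n,|S|)$-type states. For distinct $(S_1,T_1)$ and $(S_2,T_2)$: if $S_1\not\subseteq S_2$, the word $w_{S_2}$ brings $(S_2,T_2)$ to a final state of $\mathrm{DPS}(M)$ and $(S_1,T_1)$ to a non-final one; the subcase $S_2\not\subseteq S_1$ is symmetric; if $S_1=S_2$ but $T_1\neq T_2$, an analogous use of $w_{\overline{T_1}}$ or $w_{\overline{T_2}}$ separates them; and $\bot$ is distinguished from every other state as the unique empty state. I expect the main obstacle to be the bookkeeping of this case analysis and verifying that each chosen witness transformation acts correctly on all four subsets $S_1,T_1,S_2,T_2$ simultaneously; once $\mathcal{T}(M)=\mathcal{T}_Q$ is invoked, however, each such verification collapses to a routine set-theoretic check.
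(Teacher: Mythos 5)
Your proposal is correct and follows essentially the same route as the paper: the upper bound is read off from the count of $(n,|S|)$-type states in the support automaton $M_S$, nonemptiness of every atom and reachability of every $(n,|S|)$-type state are obtained by realizing suitable transformations (including constants for $\bot$) as words via $\mathcal{T}(M)=\mathcal{T}_Q$, and distinguishability uses exactly the same case analysis with the words $w_{S}$, $w_{S'}$, $w_{\overline{T}}$, $w_{\overline{T'}}$ and the observation that $\bot$ is the unique empty state.
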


Let $\mathcal{T}\subseteq \mathcal{P}_Q$ be a semigroup of permutations of $Q$.
Such a semigroup is called \emph{$k$-set-transitive} for an integer $k\leq|Q|$ if for any two sets $S,T\subseteq Q$ with $|S|=|T|=k$
there exists some $f\in\mathcal{T}$ with $f(S)=T$ and is \emph{set-transitive} if it is $k$-set-transitive for each $1\leq k\leq |Q|$.

\begin{proposition}
\label{prop-weak}
Suppose $L$ is a regular language with minimal automaton $M=(Q,\Sigma,\delta,q_0,F)$, $|Q|\geq 3$.
Then for any atom $A_S$ of $L$, if $A_S$ is maximal, then
$\mathcal{P}(M)$ is $|S|$-set-transitive,
$\mathcal{T}(M)$ contains a transformation of rank $n-1$
and each $A_X$ with $|X|=|S|$ is an atom.
\end{proposition}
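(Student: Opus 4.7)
The plan is to derive all three conclusions from a single input: maximality of $A_S$ forces every $(n,s)$-type state of $\mathrm{DPS}(M)$ to be reachable from $(S,\overline{S})$, where $s=|S|$. For $s$-set-transitivity, note that for any $s$-subset $X\subseteq Q$ the state $(X,\overline{X})$ is itself of $(n,s)$-type, so by maximality there is a word $w_X$ with $Sw_X=X$ and $\overline{S}w_X=\overline{X}$. Since these images partition $Q$, $w_X$ induces a bijection, so $w_X\in\mathcal{P}(M)$. The structural observation I would use here is that $\mathcal{P}(M)$, being a subsemigroup of the finite group $\mathcal{P}_Q$ containing the identity (induced by $\varepsilon$), is itself a subgroup. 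Consequently, for any two $s$-subsets $X,Y$, the permutation $w_X^{-1}w_Y\in\mathcal{P}(M)$ sends $X$ to $Y$.

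For a rank-$(n-1)$ transformation, I would pick an $(n,s)$-type state $(S',T')$ with $|S'|+|T'|=n-1$, which exists by a routine case check: for $1\leq s\leq n-1$ take, e.g., $|S'|=\max(1,s-1)$ and $|T'|=n-1-|S'|$, where $n\geq 3$ is exactly what keeps both sides positive; the boundary cases $s\in\{0,n\}$ use the states $(\emptyset,T')$ or $(S',\emptyset)$ with $|T'|=n-1$ or $|S'|=n-1$. Any witnessing word $w$ reaching such a state then satisfies $Qw=Sw\cup\overline{S}w=S'\cup T'$, which has cardinality exactly $n-1$, so $w$ induces a transformation of rank $n-1$ in $\mathcal{T}(M)$.

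Finally, for the claim that every $A_X$ with $|X|=s$ is an atom, I would invoke the $s$-set-transitivity just proved: pick $w\in\mathcal{P}(M)$ with $Xw=S$; because $w$ is a permutation, also $\overline{X}w=\overline{S}$. Choosing any $u\in A_S$ (nonempty since $A_S$ is itself an atom), the word $wu$ satisfies $X(wu)=Su\subseteq F$ and $\overline{X}(wu)=\overline{S}u\subseteq\overline{F}$, so $wu\in A_X$ and hence $A_X\neq\emptyset$. The main subtlety I anticipate is really only the case split in the middle step, where one must verify that the chosen pair $(S',T')$ genuinely satisfies the size constraints of an $(n,s)$-type state; beyond this bookkeeping (for which the hypothesis $n\geq 3$ is essential), everything falls out mechanically once $\mathcal{P}(M)$ is recognized as a group.
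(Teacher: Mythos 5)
Your proof is correct and follows essentially the same route as the paper: maximality forces every $(n,|S|)$-type state to be reachable from $(S,\overline{S})$, from which set-transitivity is read off via the permutations $w_X$ (your appeal to $\mathcal{P}(M)$ being a finite subsemigroup of a group, hence a group, is the same as the paper's explicit $w_X^{n!-1}$ inverse), and a rank-$(n-1)$ transformation is extracted from a reachable state $(S',T')$ with $|S'|+|T'|=n-1$ (the paper's two-case choice of deleting one element from $S$ or from $\overline{S}$ is an instance of your bookkeeping). The only cosmetic divergence is in the last claim, where the paper gets nonemptiness of $A_X$ directly from the fact that in a maximal $M_S$ the state $(X,\overline{X})$ must be nonempty, whereas you exhibit the explicit witness $wu$; both are valid.
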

\begin{proof}
Suppose $A_S$ is maximal. Then, each $(X,\overline{X})$ with $|X|=|S|$ is an $(n,|S|)$-state, thus has to be
nonempty and reachable from $(S,\overline{S})$. Hence $A_X$ is an atom as well.

Moreover, in that case $(X,\overline{X})=(S,\overline{S})w_X$ for some $w_X$, and thus $w_X$ induces a permutation
with $Sw_X=X$. This implies $Xw_X^{n!-1}=S$ and hence, $Xw_X^{n!-1}w_Y=Y$ for any $X,Y$ with $|X|=|Y|$ and $\mathcal{P}(M)$
is indeed $|S|$-set-transitive.

Also, if $(S,\overline{S})w=(X_1,X_2)$, then $|X_1|+|X_2|$ is the rank of $w$.
If $|S|>1$ and $s\in S$, then $(S-\{s\},\overline{S})$ is an $(n,|S|)$-type state, thus it's reachable from $(S,\overline{S})$
by some word $w$,
hence (by $|S-\{s\}|+|\overline{S}|=n-1$) $w$ induces a transformation of rank $n-1$.
If $|S|\leq 1$, then by $n\geq 3$, $|\overline{S}|>1$ and $(S,\overline{S}-\{s\})$ is reachable from $(S,\overline{S})$ by $w$,
again implying that the rank of $w$ is $n-1$.
\end{proof}

Before proceeding, we recall the following theorem:
\begin{theorem}[\cite{andre}, Thm. 3.2]
\label{thm-andre}
if $\mathcal{S}$ is a subsemigroup of $\mathcal{T}_n$
such that $\mathcal{S}$ contains a transformation of rank $n-1$ and the subgroup of permutations of $\mathcal{S}$ is $2$-set transitive, then
$\mathcal{S}$ contains all the singular transformations in $\mathcal{T}_n$.
\end{theorem}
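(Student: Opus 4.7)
The plan is to reduce the theorem to a classical generation result and then exploit the 2-set transitive permutation subgroup inside $\mathcal{S}$ to build every rank $n-1$ transformation from the single one we are given. The external input I would use is Howie's theorem (1966): in $\mathcal{T}_n$, the semigroup of singular transformations is generated by the set of idempotents (equivalently, by the set of all transformations) of rank $n-1$. Thus it suffices to show that every rank $n-1$ transformation belongs to $\mathcal{S}$.

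Let $G=\mathcal{S}\cap\mathcal{P}_n$ denote the 2-set transitive permutation subgroup, and let $t\in\mathcal{S}$ be the given rank $n-1$ element, with unique kernel pair $\{a,b\}$ and image $I=\{1,\ldots,n\}t$ of size $n-1$. A direct diagrammatic computation shows that for any $\pi_1,\pi_2\in G$ the product $\pi_1 t\pi_2\in\mathcal{S}$ is again of rank $n-1$, with kernel pair $\{a,b\}\pi_1^{-1}$ and image $I\pi_2$. Since $G$ is 2-set transitive, varying $\pi_1$ realises every prescribed 2-element subset of $\{1,\ldots,n\}$ as a kernel pair. Moreover, 2-set transitivity of $G$ on a set of size $\geq 3$ forces point-transitivity (a quick orbit argument: if $G$ had two orbits, any pair with one representative from each could not be mapped to a pair lying wholly inside one orbit, contradicting 2-set transitivity), and, by complementation, transitivity on $(n-1)$-subsets; hence varying $\pi_2$ realises every prescribed $(n-1)$-subset as an image. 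Consequently, $\mathcal{S}$ contains a rank $n-1$ element for every (kernel pair, image) combination.

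The final step is to realise the bijection between the quotient and the image for an arbitrary rank $n-1$ target $f$. Given $f$ with kernel pair $\{c,d\}$ and image $J$, I would factorise $f$ as a short composition of rank $n-1$ elements already known to lie in $\mathcal{S}$: the first factor has kernel pair $\{c,d\}$ together with an auxiliary image, the second factor has a compatible kernel pair and image $J$, and an appropriate permutation from $G$ absorbs the residual mismatch. Howie's explicit normal form for singular transformations in terms of rank $n-1$ idempotents both guides this factorisation and guarantees its existence. Once every rank $n-1$ transformation is shown to lie in $\mathcal{S}$, Howie's generation result immediately gives all singular transformations.

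\textbf{Main obstacle.} The decisive part is the last paragraph: pre- and post-composition with permutations from $G$ controls the kernel pair and the image completely, but gives only limited control over the induced bijection between the $(n-1)$-element quotient and the $(n-1)$-element image, because $G$ is assumed merely 2-set transitive rather than 2-transitive, let alone the full symmetric group. Overcoming this requires genuinely multiplying two (or more) rank $n-1$ elements of $\mathcal{S}$ together and tracking how rank is preserved in the product; this is where the two hypotheses of the theorem — existence of a rank $n-1$ element and 2-set transitivity — combine most non-trivially, and is presumably why the statement is cited from \cite{andre} rather than proved on the spot.
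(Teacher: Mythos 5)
First, a point of order: the paper does not prove this statement at all --- it is imported verbatim from \cite{andre} (Thm.~3.2) precisely because its proof is a substantial piece of transformation-semigroup theory, so there is no internal proof to compare yours against; your attempt has to stand on its own.

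On its own terms, the attempt has a genuine gap, and it is exactly the one you flag yourself in your final paragraph. Your first two steps are fine: reducing to ``all rank $n-1$ maps lie in $\mathcal{S}$'' via Howie's generation theorem is legitimate, and the computation that $\pi_1 t\pi_2$ has kernel pair $\{a,b\}\pi_1^{-1}$ and image $I\pi_2$, together with the observation that $2$-set transitivity gives transitivity on points and hence on $(n-1)$-subsets, correctly yields one rank $n-1$ element of $\mathcal{S}$ for every prescribed (kernel pair, image) combination. But a rank $n-1$ map is determined by its kernel pair, its image, \emph{and} the induced bijection from the $(n-1)$-element quotient onto the $(n-1)$-element image, and your argument for realising all $(n-1)!$ bijections does not go through. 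The claim that ``an appropriate permutation from $G$ absorbs the residual mismatch'' is unjustified: a $2$-set transitive group can be very small compared to the symmetric group (e.g.\ the $2$-homogeneous but not $2$-transitive affine groups classified by Kantor), so an arbitrary residual bijection need not be induced by any element of $G$. The appeal to ``Howie's explicit normal form'' is circular here: that normal form is a product of rank $n-1$ \emph{idempotents}, and an idempotent with kernel pair $\{c,d\}$ is a map with a completely specified bijection (the identity off $\{c,d\}$), so producing even one such idempotent already requires the bijection control you are trying to establish. Closing this gap requires genuinely multiplying several rank $n-1$ elements of $\mathcal{S}$ and analysing the group of bijections so generated --- this is the actual content of \cite{andre}, Thm.~3.2, and it is why the paper cites the result rather than proving it. As written, your proposal is an accurate reduction of the theorem to its hard kernel, not a proof of it.
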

(Note: the theorem in this form is a rather specialized variant, applied to near permutation semigroups of the form $\mathcal{S}=\langle\mathcal{G},\{u\}\rangle$. However, having only one transformation of rank $n-1$ suits our needs in this paper.)

We are ready to show the following result regarding atoms having maximal state complexity:
\begin{proposition}
Suppose $L$ is a regular language with minimal automaton $M=(Q,A,\delta,q_0,F)$, $|Q|\geq 3$ and that whenever $A_S$ is an atom of $L$
then $A_S$ has maximal (i.e. $\Psi(|Q|,|S|)$) state complexity. Then:
\begin{itemize}
\item $\mathcal{P}(M)$ is set-transitive,
\item $\mathcal{T}(M)$ contains at least one transformation of rank $|Q|-1$ and
\item every $A_T$ is an atom (thus $L$ is maximally atomic).
\end{itemize}
\end{proposition}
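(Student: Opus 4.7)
First, $\varepsilon \in A_F$ since $\{q \in Q : q\varepsilon \in F\} = F$, so $A_F$ is a nonempty atom; by hypothesis it is maximal, and Proposition~\ref{prop-weak} immediately supplies that $\mathcal{P}(M)$ is $|F|$-set-transitive, that $\mathcal{T}(M)$ contains a transformation $u$ of rank $n-1$ (already the second bullet), and that every $A_X$ with $|X|=|F|$ is an atom. Let $p$ and $q^*$ denote the unique merged-to and not-hit elements of $u$; these are distinct, since $p \in Qu$ while $q^* \notin Qu$.

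For the first bullet, the plan is to induct on $|k-|F||$, maintaining the joint invariant that $\mathcal{P}(M)$ is $k$-set-transitive and some $A_Y$ with $|Y|=k$ is a nonempty atom, for each $0 \leq k \leq n$. The base case is the previous paragraph. For the upward step at a level $1 \leq k \leq n-2$ with witness $A_Y$: since $|Q \setminus \{p,q^*\}| = n-2 \geq k-1$, the set $X_0 = \{p\} \cup S$ for any $S \subseteq Q \setminus \{p,q^*\}$ of size $k-1$ is a $k$-subset containing $p$ but not $q^*$; by $k$-set-transitivity there is a permutation word $v$ inducing $\pi$ with $\pi(X_0) = Y$, so $\pi(p) \in Y$ and $\pi(q^*) \notin Y$. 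With $w_u$ inducing $u$ and any $w_Y \in A_Y$, the concatenation $w = w_u v w_Y$ induces a transformation whose preimage of $F$ equals $u^{-1}(\pi^{-1}(Y))$; an elementary count on the collision structure of $u$ then gives that this preimage has size $k+1$, so $w$ witnesses a nonempty atom $A_T$ with $|T|=k+1$. By hypothesis $A_T$ is maximal, so Proposition~\ref{prop-weak} upgrades the set-transitivity exponent to $k+1$ and certifies that every $A_X$ with $|X|=k+1$ is an atom. A symmetric choice $X_0 = \{q^*\} \cup S$ handles the downward step (for $k \geq 2$). After the induction has swept both directions from $|F|$, $\mathcal{P}(M)$ is $k$-set-transitive for every $k$, which is the first bullet.

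In particular $\mathcal{P}(M)$ is $2$-set-transitive (using $n \geq 3$), so Theorem~\ref{thm-andre} applied to $\mathcal{T}(M)$ together with $u$ guarantees that $\mathcal{T}(M)$ contains every singular transformation of $Q$. For the third bullet, given any $T \subseteq Q$: when $T = Q$ or $T = \emptyset$, a constant into $F$ or $\overline{F}$ respectively lies in $\mathcal{T}(M)$ and witnesses $A_T$; otherwise a rank-$2$ transformation with image $\{a,b\}$ (for some $a \in F$, $b \in \overline{F}$) and $a$-preimage exactly $T$ is singular, hence in $\mathcal{T}(M)$, and witnesses $A_T$. The main obstacle I foresee is the inductive step of the middle paragraph: one must verify that the permutation $\pi$ chosen by $k$-set-transitivity, together with $u$, shifts the size of the preimage of $F$ by exactly $\pm 1$, which relies on $p \neq q^*$ and the precise formula for $|u^{-1}(X)|$ in terms of whether $p$ and $q^*$ lie in $X$.
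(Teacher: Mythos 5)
Your proof is correct, and its skeleton is the same as the paper's: anchor at $A_F$ (nonempty because $\varepsilon\in A_F$), extract a rank-$(n-1)$ transformation via Proposition~\ref{prop-weak}, propagate the existence of atoms one cardinality at a time upward and downward from $|F|$, and finally settle $A_\emptyset$ and $A_Q$ by invoking Theorem~\ref{thm-andre} to obtain singular transformations. The genuine difference lies in the engine of the inductive step. The paper works inside the support automaton $M_X$: maximality of $A_X$ forces all $(n,|X|)$-type states other than $\bot$ to be nonempty (they are pairwise distinguishable and only one can be empty), so the specific state $(Y,\overline{Y}-\{s\})$ with $r\in Y$, $s\notin Y$ is nonempty, and pulling it back through the rank-$(n-1)$ word $w$ yields a nonempty $(Z,\overline{Z})$ with $|Z|=|X|+1$. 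You instead exploit only the \emph{conclusions} of Proposition~\ref{prop-weak} at the current level -- $k$-set-transitivity plus a witness word $w_Y\in A_Y$ -- and build the new atom by the explicit composition $w_u v w_Y$, computing $\{q:qw\in F\}=u^{-1}(\pi^{-1}(Y))=u^{-1}(X_0)$ and counting preimages ($2$ for the merged-to point $p$, $0$ for the missed point $q^*$, $1$ otherwise) to get size $k\pm1$. Both mechanisms hinge on the same combinatorial fact about rank-$(n-1)$ maps; yours avoids any appeal to the distinguishability half of maximality in the inductive step and never mentions the DPS automaton, which makes it slightly more self-contained, at the cost of having to carry the set-transitivity exponent as part of the induction invariant. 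Two cosmetic points you should tidy up: the invariant as stated claims the range $0\le k\le n$ but the induction only reaches $1\le k\le n-1$ (the endpoints $k=0,n$ are vacuous for set-transitivity and are supplied for atom-existence only afterwards via Theorem~\ref{thm-andre}), and you should record explicitly that $1\le|F|\le n-1$ (which follows from minimality and $|Q|\ge 3$), since both the base case and the final rank-$2$ construction depend on $F$ and $\overline{F}$ being nonempty.
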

\begin{proof}
For better readability, we break the proof into several smaller claims.
Let $n$ stand for $|Q|$.
\begin{enumerate}
\item By Proposition~\ref{prop-weak}, if $A_X$ is an atom for some $X\subseteq Q$ (being maximal by the assumption), $\mathcal{P}(M)$
is $|X|$-set-transitive and $\mathcal{T}(M)$ contains at least one transformation of $n-1$. Thus it suffices to show that $A_T$ is
nonempty for each $T\subseteq Q$ (since then $k$-set-transitivity is implied for each $k$).
\item Also, since if $A_X$ is an atom (maximal by assumption), then by Proposition~\ref{prop-weak} each $A_Y$ with $|Y|=|X|$ is
also an atom. Hence it suffices to show that for each $0\leq k\leq n$ there is an atom $A_X$ with $|X|=k$.
\item $A_F$ is an atom for any automaton, since $\varepsilon\in A_F$. Hence there is a word $w$ inducing a transformation of rank $n-1$,
  say $pw=qw=r$ and $Qw=Q-\{s\}$. Note that $1\leq |F|<n$.
\item Suppose $A_X$ is an atom with $1\leq |X|<n-1$. Let $Y\subseteq Q$ be a set with $|Y|=|X|$, $r\in Y$ and $s\notin Y$
  (such $Y$ exists by the cardinality constraint). Then, $(Y,\overline{Y}-\{s\})$ is an $(n,|X|)$-type state,
  thus it is in particular nonempty by maximality of $A_X$ (since in $M_X$, the only empty state is $\bot$).
  Hence, for $Z=\{p':p'w\in Y\}$ we have $|Z|=|X|+1$ (by $pw=qw=r\in Y$, $s\notin Y$ and $w$ having rank $n-1$).
  By $(Z,\overline{Z})w=(Y,\overline{Y}-\{s\})$, the latter being a nonempty state, $(Z,\overline{Z})$ is nonempty as well, and
  $A_Z$ is an atom of $L$.
  
  Thus if $A_X$ is an atom with $1\leq |X|<n-1$, then so is some $A_Z$ having size $|Z|=|X|+1$.
  This in turn implies $A_Y$ being an atom for each $Y$ with $|F|\leq |Y|\leq n-1$.
  
\item Suppose $A_X$ is an atom with $1<|X|<n$. Let $Y\subseteq Q$ be a set with $|Y|=|X|$, $s\in Y$, $r\notin Y$ (such $Y$ exists).
  Then $(Y-\{s\},\overline{Y})$ is an $(n,|X|)$-type state.
  Hence for $Z=\{p':p'w\in Y\}$ we have $|Z|=|X|-1$ and again, by $(Z,\overline{Z})w=(Y-\{s\},\overline{Y})$
  we get that $(Z,\overline{Z})$ is nonempty, thus $A_Z$ is also an atom.
  
  Hence $A_Y$ is an atom for each $Y$ with $1\leq |Y|\leq |F|$.
\item Hence, every $A_X$ with $X\notin\{\emptyset,Q\}$ is an atom, thus by Proposition~\ref{prop-weak} $\mathcal{P}(M)$ is $k$-set-transitive
for each $1\leq k<n$. Since $n$-transitivity is vacuously satisfied,
$\mathcal{P}(M)$ is set-transitive then.
\item It remains to show that $A_\emptyset$ and $A_Q$ are atoms as well.
Applying Theorem~\ref{thm-andre} we get that $\mathcal{T}(M)$ contains all the transformations having rank less than $n$.
Thus in particular, $Qu=F$ and $Qv=\overline{F}$
for some $u$ and $v$, hence $(Q,\emptyset)$ and $(\emptyset,Q)$ are nonempty, thus $A_Q$ and $A_\emptyset$ are also atoms.
\end{enumerate}
\end{proof}

We can also show the converse direction:
\begin{proposition}
Suppose $M=(Q,\Sigma,\delta,q_0,F)$ is the minimal automaton of $L$ with $n=|Q|>2$ such that
$\mathcal{P}(M)$ is set-transitive and
$\mathcal{T}(M)$ contains at least one transformation of rank $n-1$.

Then $L$ is maximally atomic.
\end{proposition}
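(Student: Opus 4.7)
The plan is to leverage Theorem~\ref{thm-andre} at the outset: since $\mathcal{P}(M)$ being set-transitive entails being $2$-set-transitive, and $\mathcal{T}(M)$ already contains a rank-$(n-1)$ transformation, the theorem yields that $\mathcal{T}(M)$ contains every singular transformation of $\mathcal{T}_Q$. Together with the full set-transitive permutation group, $\mathcal{T}(M)$ is rich enough to realize the maps we will need. Note also that since $M$ is minimal with $n \geq 3$, we must have $\emptyset \neq F \subsetneq Q$ (else all states would recognize the same trivial language), so we can always pick witnesses $x \in F$ and $y \in \overline{F}$.

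Next, I would show that every $A_X$ with $X \subseteq Q$ is nonempty, giving the maximal count of $2^n$ atoms. Given such $X$, define $f \in \mathcal{T}_Q$ by $f(p)=x$ for $p \in X$ and $f(p)=y$ for $p \notin X$. This $f$ has rank at most $2 < n$, so it is singular and therefore induced by some word $w \in \Sigma^+$. Then $Xw \subseteq F$ and $\overline{X}w \subseteq \overline{F}$, so $w \in A_X$.

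The main task is to prove that each $A_X$ has the maximal state complexity $\Psi(n,|X|)$, which by the support-automaton viewpoint amounts to showing that every $(n,|X|)$-type state is reachable from $(X,\overline{X})$ and that any two such states are distinguishable. For reachability, fix an $(n,|X|)$-type state $(S',T')$. If $|S'|+|T'| < n$, choose any transformation $g$ of rank $|S'|+|T'|$ mapping $X$ onto $S'$ and $\overline{X}$ onto $T'$; it is singular, hence induced by some word, and sends $(X,\overline{X})$ to $(S',T')$. If $|S'|+|T'|=n$ then $|S'|=|X|$ and $T'=\overline{S'}$, so $|X|$-set-transitivity of $\mathcal{P}(M)$ supplies a permutation mapping $X$ to $S'$, which sends $(X,\overline{X})$ to $(S',\overline{S'})$. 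The sink $\bot$ is reached from $(X,\overline{X})$ whenever $0<|X|<n$ via any constant map (both images collide). The boundary cases $|X|\in\{0,n\}$ are similar, producing all $2^n-1$ states of shape $(\emptyset,S)$ or $(S,\emptyset)$.

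For distinguishability, if $(S,T) \neq (S',T')$, I would assume without loss of generality that $S \not\subseteq S'$, fix $q \in S \setminus S'$, and construct a singular $f$ that sends $S' \to x \in F$, sends every state outside $S'$ (in particular $T'$ and $q$) to $y \in \overline{F}$. Then $(S',T')f \in F'$ while $(S,T)f$ is non-final (because $q \in S$ and $qf \in \overline{F}$, either yielding a non-final product or $\bot$). The symmetric case handles $T \neq T'$, and $\bot$ is distinguished from any nontrivial $(S,T)$ by the reachability argument for a final state. The only place where something genuinely needs care is the interplay between $|X|$-set-transitivity and the singular-transformation part: the rank-$n$ (i.e.\ permutation) case of reachability cannot be handled by singular transformations and must be done by the permutation group. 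Fortunately, set-transitivity is exactly the hypothesis that covers this remaining case, so all the pieces assemble cleanly into the conclusion that $L$ is maximally atomic.
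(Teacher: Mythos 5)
Your proposal is correct and follows essentially the same route as the paper's own proof: invoke Theorem~\ref{thm-andre} to obtain all singular transformations, use those together with $|X|$-set-transitivity to reach every $(n,|X|)$-type state (with constant maps reaching $\bot$), and distinguish states by words collapsing one component into $F$ and its complement into $\overline{F}$. The only cosmetic difference is that you establish nonemptiness of the atoms as a separate preliminary step, whereas the paper folds it into the distinguishability argument.
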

\begin{proof}
If $\mathcal{P}(M)$ is set-transitive, then it is also $2$-set-transitive. Since by assumption $\mathcal{T}(M)$ contains a transformation
of rank $n-1$, we get by Theorem~\ref{thm-andre} that $\mathcal{T}(M)$ contains all the singular transformations.
Hence for any set $S\subseteq Q$ we get that from $(S,\overline{S})$
\begin{itemize}
\item each state $(X,\overline{X})$ with $|S|=|X|$ is reachable by $|S|$-set transitivity;
\item each $(n,|S|)$-type state $(S',T')$ with $|S'|+|T'|<n$ is reachable since in that case there exists a singular transformation
$f$ mapping $S$ onto $S'$ and $\overline{S}$ onto $T'$;
\item if $1\leq |S|<n$, then also $\bot$ is reachable by any constant (thus singular) transformation.
\end{itemize}
Hence every $(n,|S|)$-type state is reachable from $(S,\overline{S})$ for each $S$.
We only have to show that each $(S,T)$ is nonempty and distinguishable from any other $(S',T')$.
Indeed, by $n>2$ we have $1\leq |F|<n$, thus there exist states $p\in F$, $q\notin F$. Hence the function induced by some word
$u_S$ which maps each $s\in S$ to $p$ and each $s\notin S$ to $q$, maps $(S,T)$ to the final state $(\{p\},\{q\})$.
Also, let $(S',T')\neq(S,T)$. If $S\neq S'$, then either $u_S$ (if $S'\not\subseteq S$) or $u_{S'}$ (if $S\not\subseteq S'$)
distinguishes the two states, while if $T\neq T'$, then either $u_{\overline{T}}$ or $u_{\overline{T'}}$ does.
(The argument works if any of the sets $S$, $T$, $S'$ and $T'$ is empty as well).
\end{proof}

As a corollary we proved the following characterization of maximally atomic regular languages:
\begin{theorem}
Suppose $L$ is a regular language with minimal automaton $M=(Q,\Sigma,\delta,q_0,F)$, $|Q|\geq 3$.
Then the following are equivalent:
\begin{enumerate}
\item $L$ is maximally atomic;
\item every atom of $L$ has maximal complexity;
\item $\mathcal{T}(M)$ contains a transformation of rank $|Q|-1$, and $\mathcal{P}(M)$ is set-transitive.
\end{enumerate}
\end{theorem}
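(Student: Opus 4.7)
The plan is to establish the cycle of implications $(1)\Rightarrow(2)\Rightarrow(3)\Rightarrow(1)$, each arrow being essentially a repackaging of a previously proved statement, with no new combinatorial content beyond what has already been carried out.

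The implication $(1)\Rightarrow(2)$ is immediate from the definition of maximal atomicity: if $L$ has all $2^n$ atoms and each of them attains the bound $\Psi(n,|S|)$, then in particular every atom that happens to exist attains the bound. The implication $(2)\Rightarrow(3)$ is precisely the conclusion of the third proposition above, which, under the standing assumption $|Q|\geq 3$ and the hypothesis that every existing atom of $L$ has maximal complexity, yields both set-transitivity of $\mathcal{P}(M)$ and the existence of a transformation of rank $n-1$ in $\mathcal{T}(M)$ (and, as a byproduct, that every $A_T$ is in fact an atom, which will also give $(1)$ at no extra cost). Finally, $(3)\Rightarrow(1)$ is exactly the fourth proposition: set-transitivity of $\mathcal{P}(M)$ together with a single rank-$(n-1)$ element, combined with Theorem~\ref{thm-andre}, forces $\mathcal{T}(M)$ to contain every singular transformation, after which each $(n,|S|)$-type state in the support automaton $M_S$ is exhibited as reachable from $(S,\overline{S})$ and pairwise distinguishable, so every atom exists and attains $\Psi(n,|S|)$.

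There is no genuinely hard step here; the only obstacle worth flagging is essentially bookkeeping. One must ensure that the hypothesis $|Q|\geq 3$ propagates consistently through both of the invoked propositions: it is needed in the third proposition so that a rank-$(n-1)$ transformation can be extracted from \emph{any} existing atom (including those of size $|S|\in\{0,1\}$, where the argument passes through $\overline{S}$), and it is needed in the fourth proposition to guarantee $1\leq|F|<n$, which in turn supplies the states $p\in F$, $q\notin F$ used to build the distinguishing words $u_S$. Once this is checked, the theorem follows as a corollary of the two preceding propositions together with the trivial implication $(1)\Rightarrow(2)$.
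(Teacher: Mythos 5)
Your proposal is correct and matches the paper's own treatment: the theorem is stated there precisely as a corollary of the two preceding propositions, with $(1)\Rightarrow(2)$ immediate from the definition of maximal atomicity, $(2)\Rightarrow(3)$ (and indeed $(2)\Rightarrow(1)$) given by the proposition on languages all of whose atoms are maximal, and $(3)\Rightarrow(1)$ given by the converse proposition via Theorem~\ref{thm-andre}. Your bookkeeping remark about propagating $|Q|\geq 3$ is consistent with how the hypotheses are used in those proofs.
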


\section{Conclusions}

We introduced a particular automaton, the so-called ``disjoint power set automaton'' of a regular language
and using this notion we gave a combinatorial proof of the characterization of maximally atomic languages,
originally showed in~\cite{brzozoMaximally}.
It is an interesting question whether the DPS automaton of a language can have other uses as well.

\bibliographystyle{plainnat}

\bibliography{biblio}{}

\end{document}